\newtheorem{theorem}{Theorem}
\newtheorem{lemma}{Lemma}[section]
\newtheorem{corollary}[lemma]{Corollary}
\newtheorem{observation}[lemma]{Observation}
\newtheorem{reduction}[lemma]{Reduction}
\newtheorem*{rep@theorem}{\rep@title}
\newcommand{\newreptheorem}[2]{%
\newenvironment{rep#1}[1]{%
 \def\rep@title{#2 \ref{##1}}%
 \begin{rep@theorem}}%
 {\end{rep@theorem}}}
\newcommand{\defcal}[1]{\expandafter\newcommand\csname c#1\endcsname{{\mathcal{#1}}}}
\newcommand{\defbb}[1]{\expandafter\newcommand\csname b#1\endcsname{{\mathbb{#1}}}}
\newcounter{calBbCounter}
    \edef\letter{\Alph{calBbCounter}}
\newcommand{\eps}{\varepsilon}
\newcommand{\ie}{{\it i.e.}}
\newcommand{\nnR}{{\bR_{\geq 0}}}
\newcommand{\SWM}{{SWM}}
\newcommand{\capend}[2]{{{#1}^{(#2)}}}
\newcommand{\etal}{{\textit{et al.}}}
\title{Online Submodular Maximization: Beating $1/2$ Made Simple}
\author{
Niv Buchbinder\thanks{Dept. of Statistics and Operations Research, Tel Aviv University, Israel. E-mail: niv.buchbinder@gmail.com}
\and
Moran Feldman\thanks{Department of Mathematics and Computer Science, The Open University of Israel. E-mail: moranfe@openu.ac.il}
\and
Yuval Filmus\thanks{Department of Computer Science, Technion, Israel. E-mail: filmus.yuval@gmail.com}
\and
Mohit Garg\thanks{Department of Mathematics and Computer Science, The Open University of Israel. E-mail: mohitga@openu.ac.il}
}
\begin{document}

\maketitle
\begin{abstract}
The \texttt{Submodular Welfare Maximization} problem (\SWM) captures an important subclass of combinatorial auctions and has been studied extensively from both computational and economic perspectives. In particular, it has been studied in a natural online setting in which items arrive one-by-one and should be allocated irrevocably upon arrival. In this setting, it is well known that the greedy algorithm achieves a competitive ratio of $\nicefrac{1}{2}$, and recently Kapralov \etal~\cite{KPV13} showed that this ratio is optimal for the problem. Surprisingly, despite this impossibility result, Korula \etal~\cite{KMZ15} were able to show that the same algorithm is $0.5052$-competitive when the items arrive in a uniformly random order, but unfortunately, their proof is very long and involved. In this work, we present an (arguably) much simpler analysis that provides a slightly better guarantee of $0.5096$-competitiveness for the greedy algorithm in the random-arrival model. Moreover, this analysis applies also to a generalization of online {\SWM} in which the sets defining a (simple) partition matroid arrive online in a uniformly random order, and we would like to maximize a monotone submodular function subject to this matroid. Furthermore, for this more general problem, we prove an upper bound of $0.576$ on the competitive ratio of the greedy algorithm, ruling out the possibility that the competitiveness of this natural algorithm matches the optimal offline approximation ratio of $1-\nicefrac{1}{e}$. 

\medskip
\noindent \textbf{Keywords:} Submodular optimization, online auctions, greedy algorithms
\end{abstract} 

\pagenumbering{Alph}
\thispagestyle{empty}
\clearpage
\pagenumbering{arabic}

\section{Introduction}

The \texttt{Submodular Welfare Maximization} problem (\SWM) captures an important subclass of combinatorial auctions and has been studied extensively from both computational and economic perspectives.
In this problem we are given a set of $m$ items and a set of $n$ bidders, where each bidder has a non-negative monotone submodular utility function,\footnote{A set function $f\colon 2^\cN \to \bR$ is \emph{monotone} if $f(S) \leq f(T)$ for every two sets $S \subseteq T \subseteq \cN$ and \emph{submodular} if $f(S \cup \{u\}) - f(S) \geq f(T \cup \{u\}) - f(T)$ for every two such sets and an element $u \in \cN \setminus T$.} and the objective is to partition the items among the bidders in a way that maximizes the total utility of the bidders. Interestingly, {\SWM} generalizes other extensively studied problems such as maximum (weighted) matching and budgeted allocation (see~\cite{M13} for a comprehensive survey).

{\SWM} is usually studied in the value oracle model (see Section~\ref{sec:preliminaries} for definition). In this model the best approximation ratio for {\SWM} is $1 - (1 - \nicefrac{1}{n})^{n} \geq (1 - \nicefrac{1}{e})$~\cite{CCPV11,FNS11,MSV08}. A different line of work studies {\SWM}  in a natural online setting in which items arrive one-by-one and should be allocated irrevocably upon arrival. This setting generalizes, for example, online (weighted) matching and budgeted allocation~\cite{AGKM11,BJN07,FKMMP09,KP00,KVV90,MSVV07,Z17}. It is well known that for this online setting the greedy approach that allocates each item to the bidder with the currently maximal marginal gain for the item is $\nicefrac{1}{2}$-competitive, which is the optimal deterministic competitive ratio~\cite{FNW78,KPV13}. While randomization is known to be very helpful for many special cases of online {\SWM} (e.g., matching), Kapralov \etal~\cite{KPV13} proved that, unfortunately, this is not the case for online {\SWM} itself---\ie, no (randomized) algorithm can achieve a competitive ratio better than $\nicefrac{1}{2}$ for this problem (unless $\mbox{NP}=\mbox{RP}$).

A common relaxation of the online setting is to assume that the items arrive in a random order rather than in an adversarial one~\cite{DH09,GM08}. This model was also studied extensively for special cases of {\SWM} for which improved algorithms were obtained~\cite{GM08,KMT11,MY11}.
Surprisingly, unlike in the adversarial setting, Korula \etal~\cite{KMZ15} showed that the simple (deterministic) greedy algorithm achieves a competitive ratio of at least $0.5052$ in the random arrival model. Unfortunately, the analysis of the greedy algorithm by Korula \etal~\cite{KMZ15} is very long and involves many tedious calculations, making it very difficult to understand why it works or how to improve it.

\subsection{Our Results}

In this paper, we study the problem of maximizing a monotone submodular function over a (simple) partition matroid. This problem is a generalization of {\SWM} (see Section~\ref{sec:preliminaries} for exact definitions and a standard reduction between the problems) in which a ground set $\cN$ is partitioned into disjoint non-empty sets $P_1, P_2, \dotsc, P_m$. The goal is to choose a subset $S\subseteq \cN$ that contains at most one element from each set $P_i$ and maximizes a given non-negative monotone submodular function $f$.\footnote{This constraint on the set of items that can be selected is equivalent to selecting an independent set of the partition matroid $\cM$ defined by the partition $\{P_1, P_2, \dotsc, P_m\}$.}
We are interested in the performance of the greedy algorithm for this problem when the sets $P_i$ are ordered uniformly at random.
A formal description of the algorithm is given as Algorithm \ref{alg:RandomOrderGreedy}.
\begin{algorithm}
\caption{\textsf{Random Order Greedy}$(f, \cM)$} \label{alg:RandomOrderGreedy}
Initialize: $A_0 \leftarrow \varnothing$.\\
Let $\pi$ be a uniformly random permutation of $[m]$.\\
\For{$i$ = $1$ \KwTo $m$}
{
    Let $u_i$ be the element $u \in P_{\pi(i)}$ maximizing $f(u \mid A_{i - 1}) \triangleq f(A_{i-1} \cup\{u\})-f(A_{i-1})$. \\
    $A_i \leftarrow A_{i - 1} \cup \{u_i\}$.
}
Return $A_m$.\\
\end{algorithm}

It is well known that for a fixed (rather than random) permutation $\pi$, the greedy algorithm achieves exactly $\nicefrac{1}{2}$-approximation~\cite{FNW78}.
We prove the following result.
\begin{theorem} \label{thm:lower_bound_ratio}
Algorithm~\ref{alg:RandomOrderGreedy} achieves an approximation ratio of at least $0.5096$ for the problem of maximizing a non-negative monotone submodular function subject to a partition matroid constraint.
\end{theorem}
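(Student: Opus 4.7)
The plan is to refine the classical Nemhauser--Wolsey--Fisher $\nicefrac{1}{2}$-analysis by isolating a non-negative slack in the telescoping argument and lower-bounding its expectation over the random permutation $\pi$. Fix an optimal solution $OPT$; by padding each $P_j$ with a dummy zero-valued element if necessary, assume $|OPT \cap P_j| = 1$ for every $j$, and write $o_j$ for the unique element of $OPT \cap P_j$. Introduce the hybrid sequence $B_0 = OPT$ and $B_i = (OPT \setminus \{o_{\pi(j)} : j \le i\}) \cup A_i$, so that $B_m = A_m$ and $f(OPT) - f(A_m) = \sum_{i=1}^m [f(B_{i-1}) - f(B_i)]$.

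The first step is a careful telescoping. Writing $X_i := B_{i-1} \setminus \{o_{\pi(i)}\}$, one has $B_{i-1} = X_i \cup \{o_{\pi(i)}\}$ and $B_i = X_i \cup \{u_i\}$, so
\[
f(B_{i-1}) - f(B_i) \;=\; f(o_{\pi(i)} \mid X_i) - f(u_i \mid X_i) \;\le\; f(o_{\pi(i)} \mid A_{i-1}) - f(u_i \mid X_i),
\]
using $A_{i-1} \subseteq X_i$ and submodularity. Summing and combining with the greedy inequality $\sum_i f(o_{\pi(i)} \mid A_{i-1}) \le \sum_i f(u_i \mid A_{i-1}) = f(A_m)$ yields
\[
f(A_m) \;\ge\; \tfrac{1}{2}\Bigl( f(OPT) + \sum_{i=1}^{m} f(u_i \mid X_i) \Bigr),
\]
so it suffices to show $\bE_{\pi}\!\bigl[\sum_i f(u_i \mid X_i)\bigr] \ge 0.0192 \cdot f(OPT)$.

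The second step uses the uniform randomness of $\pi$ to bound the expected slack. Conditioning on the prefix $\pi(1),\dots,\pi(i-1)$ (hence on $A_{i-1}$ and on the set of unseen blocks), the index $\pi(i)$ is uniform over the remaining $m-i+1$ blocks, so one can average the greedy inequality $f(u_i \mid A_{i-1}) \ge f(o_{\pi(i)} \mid A_{i-1})$ over all unseen blocks and then use submodularity to translate this into a lower bound on $f(u_i \mid X_i)$. The key observation is that $X_i$ differs from $A_{i-1}$ only by a uniformly random sub-collection of $OPT$ of expected size $m-i$, so the submodular loss in passing from $A_{i-1}$ to $X_i$ is controlled by an ``expected residual OPT'' quantity. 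Treating $t = i/m$ as a continuous time parameter and assembling the per-round estimates produces a small factor-revealing linear program whose optimum is the numerical ratio $0.5096$.

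The main obstacle is precisely this second step: since $u_i$ is greedy with respect to $A_{i-1}$ and not $X_i$, the naive bound $f(u_i \mid X_i) \ge 0$ recovers only the classical $\nicefrac{1}{2}$ guarantee. Obtaining the explicit constant $0.5096$ therefore requires (i) exploiting the randomness of $\pi$ to argue that, on average, the submodular loss between $A_{i-1}$ and $X_i$ is limited, (ii) introducing auxiliary quantities --- such as the marginal of the best alternative element in $P_{\pi(i)}$ with respect to $X_i$ --- to capture additional slack that would otherwise be lost, and (iii) combining these inequalities with carefully chosen non-negative multipliers (equivalently, solving a continuous LP in the time variable $t$) to extract the final numerical bound.
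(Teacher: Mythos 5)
Your first step is correct, and in fact is a restatement (with different bookkeeping) of the paper's basic analysis: your hybrid set $B_i = (OPT \setminus \{o_{\pi(j)}:j\le i\}) \cup A_i$ is exactly $A_i \cup \capend{OPT}{i}$, so your telescoping identity is equivalent to the paper's Lemma~\ref{lem:A_OPT_development} and Corollary~\ref{cor:mainineq} specialized to $S=\varnothing$, $T=OPT$. The slack you isolate, $\sum_i f(u_i \mid X_i)$, is precisely the non-telescoping residue that both analyses need to show is positive in expectation. Up to this point you have a clean $\nicefrac{1}{2}$ bound, matching the paper.

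The gap is the second step, and you correctly identify it as the crux but do not actually prove it. You need $\bE_\pi\bigl[\sum_i f(u_i \mid X_i)\bigr] \ge 0.0192\,f(OPT)$, and the only concrete tool you cite is ``average the greedy inequality over unseen blocks and use submodularity,'' which, as you yourself observe, runs the wrong way: since $A_{i-1}\subseteq X_i$, submodularity gives the upper bound $f(u_i\mid X_i)\le f(u_i\mid A_{i-1})$ rather than a lower bound, so the naive argument collapses back to $f(u_i\mid X_i)\ge 0$. Phrases like ``controlled by an expected residual $OPT$ quantity'' and ``produces a small factor-revealing LP whose optimum is $0.5096$'' defer the entire content of the theorem without supplying the inequalities that the LP would constrain, and there is no reason to believe the unspecified LP has optimum exactly $0.5096$. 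Moreover, the route you sketch (factor-revealing LPs) is the approach of Korula \etal\ that this paper is explicitly designed to avoid. The paper's actual mechanism for extracting positive slack is different in kind: it splits the run into three stages at indices $r\le r'$; it uses Lemma~\ref{lem:intermidiate_lower_bounds} to show $f(A_{r'})$ is already close to $f(OPT)/2$; it invokes a symmetry argument (Lemma~\ref{lem:C_exists}) to exhibit a backup independent set $C=h(\capend{OPT}{r})$ supported on the unseen blocks with $f(C\cup\capend{OPT}{r})$ large, by viewing the final $r$ iterations as a fresh run of the greedy and bootstrapping against the unknown true ratio $c$; and it then uses the sampling lemma (Lemma~\ref{lem:sampling}) on the random suffix $\capend{C}{r'}$ to force nontrivial gain in the last stage (Lemma~\ref{lem:guarantee}). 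That self-referential bootstrap in $c$, combined with the intermediate-value lemma, yields a closed quadratic inequality $6c^2 - 2.3984c - 0.336 \ge 0$ at $p=q=0.4$, with no LP. None of these ideas appears in your sketch, so the second step, as written, does not constitute a proof.
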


Through a standard reduction from {\SWM}, this result yields the same guarantee also on the performance of the greedy algorithm for {\SWM} in the random order model.
Thus, the result both generalizes and improves over the previously known $0.5052$-approximation~\cite{KMZ15}. Our analysis is also arguably simpler, giving a direct, clean, and short proof that avoids the use of factor revealing LPs.

It should also be mentioned that the result of Korula \etal~\cite{KMZ15} represents the first combinatorial algorithm for offline {\SWM} achieving a better approximation ratio than $\nicefrac{1}{2}$. Analogously, our result is a combinatorial algorithm achieving a better than $\nicefrac{1}{2}$ approximation ratio for the more general problem of maximizing a non-negative monotone submodular function subject to a partition matroid constraint. We remark that in a recent work Buchbinder \etal~\cite{BFG18} described a (very different) offline combinatorial algorithm which achieves a better than $\nicefrac{1}{2}$ approximation for the even more general problem of maximizing a non-negative monotone submodular function subject to a general matroid constraint. However, the approximation guarantee achieved in~\cite{BFG18} is worse, and the algorithm is more complicated and cannot be implemented in an online model.

The greedy algorithm in the random arrival model is known to be $(1-\nicefrac{1}{e})$-competitive for special cases of {\SWM}~\cite{GM08}.
For online {\SWM} it is an open question whether the algorithm achieves this (best possible) ratio. However, for the more general problem of maximizing a monotone submodular function over a partition matroid, the following result answers this question negatively. In fact, the result shows that the approximation ratio obtained by the greedy algorithm is quite far from $1 - \nicefrac{1}{e} \approx 0.632$.

\begin{theorem} \label{thm:upper_bound_ratio}
There exist a partition matroid $\cM$ and a non-negative monotone submodular function $f$ over the same ground set such that the approximation ratio of Algorithm~\ref{alg:RandomOrderGreedy} for the problem of maximizing $f$ subject to the constraint defined by $\cM$ is at most $19/33 \leq 0.576$.
\end{theorem}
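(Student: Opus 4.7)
I propose to prove Theorem~\ref{thm:upper_bound_ratio} by exhibiting an explicit instance: a partition matroid $\cM$ together with a monotone submodular coverage function $f$ on a small universe $\cU$. Each part $P_i$ would contain two elements: a \emph{bait} element, whose marginal is high when the part is encountered first in $\pi$, and an \emph{optimum} element, which lies in the (essentially unique) maximum-value independent set. The bait's coverage would overlap with the items that must be collected by the optimum elements of the \emph{other} parts, so that choosing a bait early commits greedy to a configuration incompatible with the optimum.

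Having written down the construction, I would first compute $\mathrm{OPT}$ by inspection (selecting one optimum element from each part yields an independent set whose value I can read off directly from the coverage pattern). Then, to evaluate $\bE_\pi[f(A_m)]$, I would enumerate the $m!$ permutations (or, in practice, partition them into a handful of equivalence classes by the symmetries of the construction) and trace Algorithm~\ref{alg:RandomOrderGreedy} step by step for each representative. Whenever two elements in the current part attain the same marginal, I would break the tie adversarially, since the upper bound must hold for every valid tie-breaking rule. This yields, for each permutation class, a value $f(A_m)$ depending only on a few structural features (primarily the positions of the ``bait'' parts in $\pi$). Summing weighted by the uniform measure on permutations gives $\bE_\pi[f(A_m)]$ as a closed form in the construction parameters, and the theorem is then a direct numerical comparison with $\mathrm{OPT}$.

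The delicate step, and the main obstacle, is calibrating the construction so that a single mistaken bait cannot be ``undone'' in the remaining steps. In simple coverage constructions—say, a single part with a large shared-pool element plus many small parts each covering one pool item—greedy's first wrong pick costs only a constant amount relative to $\mathrm{OPT}$, and a parametric analysis of such families yields ratios converging only to $3/4$, far from $19/33$. To push the ratio down to $19/33 \approx 0.576$, the instance must couple several bait mechanisms so that each ill-chosen bait simultaneously blocks multiple optimum contributions from other parts, and so that the probability (over $\pi$) of such blockage is bounded away from $0$. I expect that optimizing over a small parameterized family (varying the number of parts, the pool sizes, and the overlap pattern between the bait and optimum elements) pinpoints the specific instance achieving $\bE_\pi[f(A_m)]/\mathrm{OPT} = 19/33$, with the exact fraction $19/33$ arising from averaging a small collection of permutation-dependent values over the appropriate number of orderings.
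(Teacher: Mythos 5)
Your framework is right---an explicit weighted coverage instance, adversarial tie-breaking, and verification by tracing the algorithm over permutations---but the concrete structural idea, two elements per part (one ``bait'' plus one optimum element), cannot produce a ratio as low as $19/33$. With a single bait in each $P_i$, greedy can be lured only at iteration $1$: after the first pick, whatever part arrives next offers just one bait, whose marginal has already been diminished by overlap with the first bait, so the genuine optimum element becomes competitive and the damage stops compounding. You correctly sense the danger (``a single mistaken bait cannot be undone''), but the fix you propose---coupling several baits \emph{across} parts so each one blocks several optima---still leaves greedy free to recover after the first step, and your proposal does not say how to prevent that.

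The key idea in the paper's construction is instead a \emph{hierarchy of baits inside every part}, one tuned to each possible position in the arrival order. There are four parts $P_1,\dots,P_4$, each containing eight elements: the optimum element $o_i$; an $x$-bait $x_i$ that wins if $P_i$ arrives first; three $y$-baits $y_{ji}$ (indexed by which part arrived earlier) that win at iteration $2$ after $x_j$ was taken; and three $z$-baits $z_{jki}$ that win at iteration $3$ after $x_j,y_{jk}$ were taken. The function is a weighted coverage function over a $28$-element universe $\cU$ with weights tuned so that on seeing $P_{\pi(1)},P_{\pi(2)},P_{\pi(3)},P_{\pi(4)}$ greedy collects $x_{\pi(1)},\,y_{\pi(1)\pi(2)},\,z_{\pi(1)\pi(2)\pi(3)},\,o_{\pi(4)}$, of total weight $152$, while the optimum $\{o_1,o_2,o_3,o_4\}$ has weight $264$, giving $152/264=19/33$. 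Crucially, by symmetry this outcome is the \emph{same} for every permutation, so the ratio is attained deterministically and there is no nontrivial averaging over $\pi$ at all---contrary to what your outline anticipates. Without a per-iteration bait cascade of this kind, I do not see how your construction could be completed.
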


\subsection{Our Technique} \label{ssc:technique}

The proof we describe for Theorem~\ref{thm:lower_bound_ratio} consists of two parts. In the first part (Section~\ref{ssc:basic}), we show that when the greedy algorithm considers sets of the partition in a random order, it gains most of the value of its output set during its first iterations (Lemma~\ref{lem:intermidiate_lower_bounds}). For example, after viewing $90\%$ of the sets the algorithm already has $49.5\%$ of the value of the optimal solution, which is $99\%$ of its output guarantee according to the standard analysis. Thus, to prove that the greedy algorithm has a better than $\nicefrac{1}{2}$ approximation ratio, it suffices to show that it gets a non-negligible gain from its last iterations.

In the second part of our analysis (Section \ref{ssc:break_half}), we are able to show that this is indeed the case.
Intuitively, in this part of the analysis we view the execution of Algorithm~\ref{alg:RandomOrderGreedy} as having three stages defined by two integer values $0 < r \leq r' < m$. The first stage consists of the first $r$ iterations of the algorithm, the second stage consists of the next $r' - r$ iterations and the last stage consists of the remaining $m - r'$ iterations.
As explained above, by Lemma~\ref{lem:intermidiate_lower_bounds} we get that if $r'$ is large enough, then $f(A_{r'})$ is already very close to $f(OPT)/2$, where $OPT$ is an optimal solution.
We use two steps to prove that $f(A_m)$ is significantly larger than $f(A_{r'})$, and thus, achieves a better than $\nicefrac{1}{2}$ approximation ratio. In the first step (Lemma~\ref{lem:C_exists}), we use symmetry to argue that there are two independent sets of $\cM$ that consist only of elements that Algorithm~\ref{alg:RandomOrderGreedy} can pick in its second and third phases, and in addition, the value of their union is large. One of these sets consists of the elements of $OPT$ that are available in the final $m-r$ iterations, and the other set (which we denote by $C$) is obtained by applying an appropriately chosen function to these elements of $OPT$. In the second step of the analysis, implemented by Lemma~\ref{lem:guarantee}, we use the fact that the final $m-r'$ elements of $C$ are a random subset of $C$ to argue that they have a large marginal contribution even with respect to the final solution $A_m$. Combining this with the observation that these elements represent a possible set of elements that Algorithm~\ref{alg:RandomOrderGreedy} could pick during its last stage, we get that the algorithm must have made a significant gain during this stage.

\subsection{Additional Related Results}

The optimal approximation ratio for the problem of maximizing a monotone submodular function subject to a partition matroid constraint (and its special case {\SWM}) is obtained by an algorithm known as (Measured) Continuous Greedy~\cite{CCPV11,FNS11}. Unfortunately, this algorithm is problematic from a practical point of view since it is based on a continuous relaxation and is quite slow. As discussed above, our first result can be viewed as an alternative simple combinatorial algorithm for this problem, and thus, it is related to a line of work that aims to find better alternatives for Continuous Greedy~\cite{BV14,BFS17,FW14,MBKVK15}.

While the problem of maximizing a monotone submodular function subject to a partition matroid was studied almost exclusively in the value oracle model, the view of {\SWM} as an auction has motivated its study also in an alternative model known as the demand oracle model. In this model a strictly better than $(1-\nicefrac{1}{e})$-approximation is known for the problem~\cite{FV10}.

Another online model, that can be cast as a special case of the random arrival model and was studied extensively, is the i.i.d.\ stochastic model. In this model input items arrive i.i.d.\ according to a known or unknown distribution.
In the i.i.d.\ model with a known distribution improved competitive ratios for special cases of {\SWM} are known~\cite{AHL12,FMMM09,HMZ11,MGS12}.
Moreover, for the i.i.d.\ model with an unknown distribution a $(1-\nicefrac{1}{e})$-competitive algorithm is known for {\SWM} as well as for several of its special cases~\cite{DJSW11,DBA12,KPV13}.

\section{Preliminaries} \label{sec:preliminaries}

For every two sets $S, T \subseteq \cN$ we denote the marginal contribution of adding $T$ to $S$, with respect to a set function $f$, by $f(T \mid S) \triangleq f(T \cup S) - f(S)$. For an element $u \in \cN$ we use $f(u \mid S)$ as shorthands for $f(\{u\} \mid S)$---note that we have already used this notation previously in Algorithm~\ref{alg:RandomOrderGreedy}.

Following are two useful claims that we use in the analysis of Algorithm~\ref{alg:RandomOrderGreedy}. The first of these claims is a rephrased version of a useful lemma which was first proved in~\cite{FMV11}, and the other is a well known technical observation that we prove here for completeness.

\begin{lemma}[Lemma~2.2 of~\cite{FMV11}] \label{lem:sampling}
Let $f\colon 2^\cN \to \bR$ be a submodular function, and let $T$ be an arbitrary set $T \subseteq \cN$. For every random set $T_p \subseteq T$ which contains every element of $T$ with probability $p$ (not necessarily independently),
\[
	\bE[f(T_p)] \geq (1 - p) \cdot f(\varnothing) + p \cdot f(T)
	\enspace.
\]
\end{lemma}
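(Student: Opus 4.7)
The plan is to show that, among all distributions of $T_p$ satisfying the prescribed marginal probability $p$ for every element of $T$, the expectation $\bE[f(T_p)]$ is minimized by the ``fully correlated'' distribution $\mu^\star$ that sets $T_p = \varnothing$ with probability $1-p$ and $T_p = T$ with probability $p$. Since this distribution attains the right-hand side of the claim exactly, the desired lower bound follows. I would first reduce to the case $f(\varnothing) = 0$ by subtracting this constant from $f$; this preserves submodularity and turns the target into $\bE[f(T_p)] \geq p \cdot f(T)$.

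Write $\bE[f(T_p)] = \sum_{S \subseteq T} \mu(S) f(S)$ to view this expectation as a linear functional of the distribution $\mu$ of $T_p$. The valid distributions (those obeying the marginal constraints, $\mu \geq 0$, and $\sum_S \mu(S) = 1$) form a compact convex polytope, so a minimizer of the linear functional exists. Among all such minimizers, I would pick one, $\mu^\star$, that in addition maximizes the auxiliary potential $\Phi(\mu) := \sum_S \mu(S) \cdot |S|^2$.

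The core step is an uncrossing argument showing that the support of $\mu^\star$ must form a chain under inclusion. Otherwise, pick two incomparable sets $S_1, S_2$ in the support and transfer some small $\eps > 0$ of probability mass from $\{S_1, S_2\}$ to $\{S_1 \cup S_2, S_1 \cap S_2\}$. The pointwise identity $\mathbf{1}[u \in S_1] + \mathbf{1}[u \in S_2] = \mathbf{1}[u \in S_1 \cup S_2] + \mathbf{1}[u \in S_1 \cap S_2]$ guarantees that every marginal is preserved, while the change in $\bE[f(T_p)]$ equals $\eps \cdot [f(S_1 \cup S_2) + f(S_1 \cap S_2) - f(S_1) - f(S_2)] \leq 0$ by submodularity. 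If this inequality is strict, it contradicts the minimality of $\bE[f(T_p)]$; if it is tight, then $\mu^\star$ remains a minimizer while $\Phi$ strictly increases, since a direct calculation gives $|S_1 \cup S_2|^2 + |S_1 \cap S_2|^2 - |S_1|^2 - |S_2|^2 = 2(|S_1 \cap S_2| - |S_1|)(|S_1 \cap S_2| - |S_2|) > 0$ for incomparable $S_1, S_2$, contradicting the choice of $\mu^\star$.

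Once the support of $\mu^\star$ is known to be a chain, the uniform marginal constraints force this chain to consist solely of $\varnothing$ and $T$: any strictly intermediate set $S$ would produce a larger marginal probability for elements inside $S$ than for elements outside of it. Thus $\mu^\star(\varnothing) = 1-p$ and $\mu^\star(T) = p$, which gives $\bE_{\mu^\star}[f(T_p)] = (1-p) f(\varnothing) + p f(T)$ and completes the proof. The main obstacle I expect is making the uncrossing argument actually reach a chain distribution, rather than iterating indefinitely through smaller and smaller perturbations; the auxiliary potential $\Phi$ used above is the clean way to bypass this issue.
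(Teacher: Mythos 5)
The paper does not supply its own proof of Lemma~\ref{lem:sampling}; the statement is quoted verbatim as Lemma~2.2 of~\cite{FMV11}. Your uncrossing argument is correct --- the auxiliary potential $\Phi$ cleanly forces termination, and once the support of $\mu^\star$ is a chain the marginal constraints do pin down the two-point distribution (the cases $p\in\{0,1\}$ or $T=\varnothing$ being trivial) --- but it is a genuinely different and substantially heavier route than the one in the cited source. The standard argument is a short telescoping bound: order $T=\{u_1,\dots,u_k\}$ arbitrarily, set $T_i=\{u_1,\dots,u_i\}$, and note that for any $S\subseteq T$,
\[
  f(S)=f(\varnothing)+\sum_{i\,:\,u_i\in S} f(u_i\mid S\cap T_{i-1})\;\geq\; f(\varnothing)+\sum_{i\,:\,u_i\in S} f(u_i\mid T_{i-1}),
\]
where the inequality is submodularity. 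Plugging in $S=T_p$, taking expectations, and using $\Pr[u_i\in T_p]=p$ together with the telescoping identity $\sum_i f(u_i\mid T_{i-1})=f(T)-f(\varnothing)$ immediately yields $\bE[f(T_p)]\geq (1-p)\cdot f(\varnothing)+p\cdot f(T)$. What your proof buys instead is structural information: it identifies the worst-case coupling explicitly and in effect re-derives, in the special case $x=p\cdot\mathbf{1}_T$, the fact that the Lov\'asz extension of a submodular function is the pointwise minimum of $\bE_\mu[f]$ over all distributions $\mu$ with the prescribed marginals. Both routes are valid; the telescoping argument is shorter, requires no extremality or compactness reasoning, and is the more economical choice for the purposes of this paper.
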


\begin{observation} \label{obs:sub_mon_combine}
For every sets two $S_1 \subseteq S_2 \subseteq \cN$ and an additional set $T \subseteq \cN$, it holds that
\[
	f(S_1 \mid T) \leq f(S_2 \mid T)
	\qquad \text{and} \qquad
	f(T \mid S_1) \geq f(T \mid S_2)
	\enspace.
\]
\end{observation}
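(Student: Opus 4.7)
The plan is to prove the two inequalities separately, as each is a routine consequence of one of the two defining properties of $f$.

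For the first inequality, I would simply observe that $S_1 \subseteq S_2$ implies $S_1 \cup T \subseteq S_2 \cup T$, so monotonicity of $f$ gives $f(S_1 \cup T) \leq f(S_2 \cup T)$. Subtracting $f(T)$ from both sides immediately yields $f(S_1 \mid T) \leq f(S_2 \mid T)$.

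For the second inequality I would appeal to the equivalent ``set form'' of submodularity, namely $f(A) + f(B) \geq f(A \cup B) + f(A \cap B)$, applied with $A = T \cup S_1$ and $B = S_2$. Using $S_1 \subseteq S_2$, one checks that $A \cup B = T \cup S_2$ and $A \cap B = S_1 \cup (T \cap S_2)$, so the inequality becomes $f(T \cup S_1) + f(S_2) \geq f(T \cup S_2) + f(S_1 \cup (T \cap S_2))$. Since $S_1 \subseteq S_1 \cup (T \cap S_2)$, monotonicity gives $f(S_1 \cup (T \cap S_2)) \geq f(S_1)$; plugging this in and rearranging yields $f(T \mid S_1) \geq f(T \mid S_2)$.

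I do not expect any real obstacle, since the observation is just the standard lift of element-wise diminishing returns to the set-wise setting. The only point worth flagging is that one cannot prove the second inequality by naively enumerating $T = \{v_1, \dots, v_k\}$ and summing the element-wise submodularity inequality, because $T$ may meet $S_2 \setminus S_1$ and the two telescoping sums would then have different numbers of nonzero terms; the cleanest remedy is either the set-form identity above, or a telescoping over $T \setminus S_1$ where one notes that elements already lying in $S_2$ contribute zero to the right-hand side.
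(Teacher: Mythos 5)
Your proof is correct. The first inequality is handled exactly as in the paper, and for the second inequality you use the same ingredients (one application of set-form submodularity, one of monotonicity) as the paper does; the only cosmetic difference is the choice of auxiliary sets --- the paper applies submodularity to $T\cup S_1$ and $S_1\cup(S_2\setminus T)$, obtaining intersection $S_1$ and then bounding $f(S_1\cup(S_2\setminus T))\le f(S_2)$ by monotonicity, whereas you apply it to $T\cup S_1$ and $S_2$, obtaining intersection $S_1\cup(T\cap S_2)$ and then bounding $f(S_1\cup(T\cap S_2))\ge f(S_1)$.
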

\begin{proof}
The first inequality holds since the monotonicity of $f$ implies that
\[
	f(S_1 \mid T)
	=
	f(S_1 \cup T) - f(T)
	\leq
	f(S_2 \cup T) - f(T)
	=
	f(S_2 \mid T)
	\enspace,
\]
and the second inequality holds since
\begin{align*}
	f(T \mid S_1)
	\geq{}&
	f(T \mid S_1 \cup (S_2 \setminus T)) =  f(T \cup S_2) - f(S_1 \cup (S_2 \setminus T))\\
	\geq{}&
	f(T \cup S_2) - f(S_2) = f(T \mid S_2)
	\enspace,
\end{align*}
%
where the first inequality follows from submodularity and the second from monotonicity.
\end{proof}

\paragraph{The \texttt{Submodular Welfare Maximization} problem (\SWM).} In this problem we are given a set $\cN$ of $m$ items and a set $B$ of $n$ bidders. Each bidder $i$ has a non-negative monotone submodular utility function $f_i\colon 2^\cN \to \nnR$; and the goal is to partition the items among the bidders in a way that maximizes $\sum_{i=1}^{m}f_i(S_i)$ where $S_i$ is the set of items allocated to bidder $i$.

\paragraph{Maximizing a monotone submodular function over a (simple) partition matroid.} In this problem we are given a partition matroid $\cM$ over a ground set $\cN$ and a non-negative monotone submodular function $f \colon 2^\cN \to \nnR$. A partition matroid is defined by a partition of its ground set into non-empty disjoint sets $P_1, P_2, \dotsc, P_m$. A set $S \subseteq \cN$ is independent in $\cM$ if $|S \cap P_i|\leq 1$ for every set $P_i$, and the goal in this problem is to find a set $S \subseteq \cN$ that is independent in $\cM$ and maximizes $f$. 

In this work we make the standard assumption that the objective function $f$ can be accessed only through a value oracle, \ie, an oracle that given a subset $S$ returns the value $f(S)$.

\paragraph{A standard reduction between the above two problems.} Given an instance of {\SWM}, we construct the following equivalent instance of maximizing a monotone submodular function subject to a partition matroid.
For each item $u\in \cN$ and bidder $i\in B$, we create an element $(u,i)$ which represents the assignment of $u$ to $i$.
Additionally, we define a partition of these elements by constructing for every item $u$ a set $P_u = \{(u, i) \mid i \in B\}$. 
Finally, for a subset $S$ of the elements, we define
\[
f(S)=\sum_{i \in B} f_i(\{u \in \cN \mid (u, i) \in S\})\enspace.
\]
One can verify that for every independent set $S$ the value of $f$ is equal to the total utility of the bidders given the assignment represented by $S$; and moreover, $f$ is non-negative, monotone and submodular.

It is important to note that running a greedy algorithm that inspects the partitions in a random order after this reduction is the same as running the greedy algorithm on the original {\SWM} instance in the random arrival model.

\paragraph{Additional technical reduction.}

Our analysis of Algorithm~\ref{alg:RandomOrderGreedy} uses two integer parameters $0 < r \leq r' < m$. A natural way to choose these parameters is to set them to $r= \alpha m$ and $r'= \beta m$, where $\alpha$ and $\beta$ are rational numbers. Unfortunately, not for every choice of $\alpha, \beta$ and $m$ these values are integral. The following reduction allows us to bypass this technical issue.
%

\begin{reduction} \label{red:integral}
For any fixed choice of two rational values $\alpha, \beta \in (0, 1)$, one may assume that $\alpha m$ and $\beta m$ are both integral for the purpose of analyzing the approximation ratio of Algorithm~\ref{alg:RandomOrderGreedy}.
\end{reduction}
\begin{proof}
Since $\alpha$ and $\beta$ are positive rational numbers, they can be represented as ratios $a_1/a_2$ and $b_1/b_2$, where $a_1$, $a_2$, $b_1$ and $b_2$ are all natural numbers. This implies that we can make $\alpha m$ and $\beta m$ integral by increasing $m$ by some integer value $0 \leq m' < a_2b_2$. To achieve this increase, we introduce $m'$ new dummy elements into the ground set and extend the objective function and  partition matroid in the following way. Let $D$ be the set of the $m'$ dummy elements.
\begin{compactitem}
	\item For every set $S$ that contains dummy elements, we define $f(S) = f(S \setminus D)$.
	\item For every dummy element $d \in D,$ we introduce a new set that contains only $d$ into the partition defining the matroid. Note that this implies that a set $S$ that contains dummy elements is independent if and only if $S \setminus D$ is independent.
\end{compactitem}
One can observe that this extension does not change the value of the optimal solution. Additionally, we observe that the extension does not affect the distribution of the value of the output set of Algorithm~\ref{alg:RandomOrderGreedy} because the fact that the algorithm added a dummy element to its solution does not affect either the current value of the solution or the marginals of elements considered later (in other words, the extension makes the algorithm have $m'$ new meaningless iterations in which it picks dummy elements, but it does not affect the behavior of the algorithm in the other iterations).

The above observations imply that the approximation ratio of Algorithm~\ref{alg:RandomOrderGreedy} is not affected by the extension, and thus, the approximation that the algorithm has for the extended instance (in which $\alpha m$ and $\beta m$ are integral) holds for the original instance as well.
\end{proof}

\section{Analysis of the Approximation Ratio}

In this section, we analyze Algorithm~\ref{alg:RandomOrderGreedy} and lower bound its approximation ratio. The analysis is split between Sections~\ref{ssc:basic} and~\ref{ssc:break_half}. In Section~\ref{ssc:basic} we present a basic (and quite standard) analysis of Algorithm~\ref{alg:RandomOrderGreedy} which only shows that it is a $\nicefrac{1}{2}$-approximation algorithm, but proves along the way some useful properties of the algorithm. In Section~\ref{ssc:break_half} we use these properties to present a more advanced analysis of Algorithm~\ref{alg:RandomOrderGreedy} which shows that  it is a $0.5096$-approximation algorithm (and thus proves Theorem~\ref{thm:lower_bound_ratio}).

Let us now define some notation that we use in both parts of the analysis. Let $OPT$ be an optimal solution (\ie, an independent set of $\cM$ maximizing $f$). Note that since $f$ is monotone we may assume, without loss of generality, that $OPT$ is a base of $\cM$ (\ie, it includes exactly one element of the set $P_i$ for every $1 \leq i \leq m$). Additionally, for every set $T \subseteq \cN$ we denote by $\capend{T}{i}$ the subset of $T$ that excludes elements appearing in the first $i$ sets out of $P_1, P_2, \dotsc, P_m$ when these sets are ordered according to the permutation $\pi$. More formally,
\[
	\capend{T}{i}
	=
	T \setminus \bigcup_{j = 1}^i P_{\pi(j)}
	=
	T \cap \bigcup_{j = i + 1}^m P_{\pi(j)}
	\enspace.
\]

Since $\pi$ is a uniformly random permutation and $OPT$ contains exactly one element of each set $P_i$ (due to our assumption that it is a base of $\cM$),
we get the following observation as an immediate consequence.
\begin{observation} \label{obs:opt_r_distribution}
For every $0 \leq i \leq m$, $\capend{OPT}{i}$ is a uniformly random subset of $OPT$ of size $m - i$.
\end{observation}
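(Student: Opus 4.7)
The plan is to reduce the claim to a uniformity statement about random subsets of the index set $[m]$, exploiting the fact that $OPT$, being a base of $\cM$, is in bijection with the partition $\{P_1, \dotsc, P_m\}$. Specifically, I would first note that since $OPT$ contains exactly one element of $P_j$ for every $1 \leq j \leq m$, we may define a bijection $\varphi \colon [m] \to OPT$ sending $j$ to the unique element of $OPT \cap P_j$. Under this bijection, the set $\bigcup_{j = 1}^{i} P_{\pi(j)}$ contributes to $OPT$ precisely the elements $\{\varphi(\pi(1)), \dotsc, \varphi(\pi(i))\}$, so
\[
	\capend{OPT}{i}
	=
	OPT \setminus \{\varphi(\pi(1)), \dotsc, \varphi(\pi(i))\}
	=
	\varphi\bigl([m] \setminus \{\pi(1), \dotsc, \pi(i)\}\bigr).
\]
In particular, $|\capend{OPT}{i}| = m - i$ deterministically.

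For the distributional claim, I would invoke the elementary fact that if $\pi$ is a uniformly random permutation of $[m]$, then the unordered set $\{\pi(1), \dotsc, \pi(i)\}$ is distributed as a uniformly random $i$-subset of $[m]$, hence its complement $[m] \setminus \{\pi(1), \dotsc, \pi(i)\}$ is a uniformly random $(m-i)$-subset of $[m]$. Pushing this distribution forward through the bijection $\varphi$ immediately yields that $\capend{OPT}{i}$ is a uniformly random subset of $OPT$ of size $m - i$, which is precisely the claim.

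There is no real obstacle here: the statement is essentially a bookkeeping observation, and the only nontrivial ingredient is the standard fact about uniform random permutations producing uniform random subsets of any prescribed size, which requires no proof.
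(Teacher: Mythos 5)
Your proof is correct and spells out precisely the reasoning the paper leaves implicit: the paper states the observation as an immediate consequence of $\pi$ being a uniformly random permutation and $OPT$ being a base, and your bijection $\varphi$ together with the standard fact about uniformly random $i$-subsets is exactly the bookkeeping that justifies it.
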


\subsection{Basic Analysis} \label{ssc:basic}

In this section, we present a basic analysis of Algorithm~\ref{alg:RandomOrderGreedy}. Following is the central lemma of this analysis which shows that the expression $f(A_i) + f(S \cup A_i \cup \capend{T}{i})$ is a non-decreasing function of $i$ for every pair of set $S \subseteq \cN$ and base $T$ of $\cM$ (recall that $A_i$ is the set constructed by Algorithm~\ref{alg:RandomOrderGreedy} during its $i$-th iteration). It is important to note that this lemma holds deterministically, \ie, it holds for \textbf{every} given permutation $\pi$.

\begin{lemma} \label{lem:A_OPT_development}
For every subset $S\subseteq \cN$, base $T$ of $\cM$ and $1 \leq i \leq m$,
\[
	f(A_i) + f(S \cup A_i \cup \capend{T}{i})
	\geq
	f(A_{i-1}) + f(S \cup A_{i-1} \cup \capend{T}{i-1})
	\enspace.
\]
\end{lemma}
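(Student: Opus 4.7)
The plan is to reduce the inequality to a local comparison between the greedy gain $f(u_i \mid A_{i-1})$ and the marginal of the optimum base's representative in $P_{\pi(i)}$, and then close the gap using greediness, submodularity, and monotonicity in a single short chain.

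First I would unpack the two sides using the structure of iteration $i$. Since $T$ is a base of $\cM$, there is a unique element $v_i \in T \cap P_{\pi(i)}$, and because the sets $P_j$ are disjoint we have $v_i \notin \bigcup_{j \leq i} P_{\pi(j)} \setminus P_{\pi(i)}$, giving the clean identity $\capend{T}{i-1} = \capend{T}{i} \cup \{v_i\}$. Combined with $A_i = A_{i-1} \cup \{u_i\}$, where $u_i \in P_{\pi(i)}$ is the greedy choice, setting $B \triangleq S \cup A_{i-1} \cup \capend{T}{i}$ reduces the claim to
\[
f(u_i \mid A_{i-1}) + f(B \cup \{u_i\}) \;\geq\; f(B \cup \{v_i\}).
\]
(Note $u_i \notin A_{i-1} \cup \capend{T}{i}$ and $v_i \notin A_{i-1} \cup \capend{T}{i}$, though both might lie in $S$; the identity is still valid in that degenerate case.)

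Next I would handle the trivial subcase $u_i = v_i$: monotonicity gives $f(u_i \mid A_{i-1}) \geq 0$ and the two set-function evaluations on the other side coincide. For $u_i \neq v_i$, the key observation is a two-step comparison of marginals. By the greedy choice of $u_i$ from $P_{\pi(i)}$, $f(u_i \mid A_{i-1}) \geq f(v_i \mid A_{i-1})$. Since $A_{i-1} \subseteq B$, Observation~\ref{obs:sub_mon_combine} (the submodular half) yields $f(v_i \mid A_{i-1}) \geq f(v_i \mid B) = f(B \cup \{v_i\}) - f(B)$. Combining these, $f(u_i \mid A_{i-1}) \geq f(B \cup \{v_i\}) - f(B)$. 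Monotonicity of $f$ gives $f(B \cup \{u_i\}) \geq f(B)$, and adding the two inequalities yields exactly the target inequality.

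I do not anticipate any substantial obstacle; the only thing to be careful about is the bookkeeping that $\capend{T}{i-1}\setminus \capend{T}{i}=\{v_i\}$ (using that $T$ is a base, not merely independent) and the handling of the $u_i = v_i$ case so that the greedy inequality is not vacuously invoked. Once this reduction is in place, the proof collapses to one application each of the greedy rule, submodularity, and monotonicity.
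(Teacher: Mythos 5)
Your proof is correct and essentially the same as the paper's: both proofs use the single element $v_i = T \cap P_{\pi(i)}$, compare the greedy gain $f(u_i \mid A_{i-1})$ to $f(v_i \mid A_{i-1})$, push the latter down via submodularity to a marginal over $S \cup A_{i-1} \cup \capend{T}{i}$, and finish with monotonicity. Your presentation differs only cosmetically (isolating the local inequality $f(u_i \mid A_{i-1}) + f(B \cup \{u_i\}) \geq f(B \cup \{v_i\})$ and adding two one-line inequalities rather than writing a single telescoping chain), but the logical content is identical.
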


\begin{proof}
Observe that
\begin{align*}
	f(A_i) - f(A_{i-1})
	={} &
	f(u_i \mid  A_{i-1})
	\geq
	f(T \cap P_{\pi(i)} \mid  A_{i-1})
	\geq
	f(T \cap P_{\pi(i)} \mid S \cup A_{i-1} \cup \capend{T}{i}) \\
	={} &
	f(S \cup A_{i-1} \cup \capend{T}{i-1}) - f(S \cup A_{i-1} \cup \capend{T}{i}) \\
	\geq{} &
	f(S \cup A_{i-1} \cup \capend{T}{i-1}) - f(S \cup A_{i} \cup \capend{T}{i})
	\enspace,
\end{align*}
where the first inequality follows from the greedy choice of the algorithm, the second inequality holds due to Observation~\ref{obs:sub_mon_combine} and the final inequality follows from the monotonicity of $f$.
\end{proof}

The following is an immediate corollary of the last lemma. Note that, like the lemma, it is deterministic and, thus, holds for \textbf{every} permutation $\pi$.

\begin{corollary}\label{cor:mainineq}
For every subset $S\subseteq \cN$, base $T$ of $\cM$ and $0 \leq i \leq m$,
\[
	f(A_m) + f(S \cup A_m)
	\geq
	f(A_i) + f(S \cup A_i \cup \capend{T}{i})
	\geq
	f(S \cup T)
	\enspace.
\]
\end{corollary}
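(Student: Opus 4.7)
The plan is to obtain both inequalities by telescoping Lemma~\ref{lem:A_OPT_development}, exploiting the boundary values of $A_i$ and $\capend{T}{i}$ at $i=0$ and $i=m$.

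For the left inequality, I would chain Lemma~\ref{lem:A_OPT_development} across iterations $i+1, i+2, \dotsc, m$. Each application gives
\[
f(A_j) + f(S \cup A_j \cup \capend{T}{j}) \geq f(A_{j-1}) + f(S \cup A_{j-1} \cup \capend{T}{j-1}),
\]
so composing these from $j=i+1$ up to $j=m$ yields
\[
f(A_m) + f(S \cup A_m \cup \capend{T}{m}) \geq f(A_i) + f(S \cup A_i \cup \capend{T}{i}).
\]
Now I just need to note that $\capend{T}{m} = T \setminus \bigcup_{j=1}^m P_{\pi(j)} = \varnothing$ (since $T$ is a base of $\cM$, every element of $T$ lies in some $P_j$), which turns the left-hand side into $f(A_m) + f(S \cup A_m)$.

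For the right inequality, I would chain Lemma~\ref{lem:A_OPT_development} in the other direction, over iterations $1, 2, \dotsc, i$. This gives
\[
f(A_i) + f(S \cup A_i \cup \capend{T}{i}) \geq f(A_0) + f(S \cup A_0 \cup \capend{T}{0}).
\]
Using $A_0 = \varnothing$ and $\capend{T}{0} = T$, the right-hand side equals $f(\varnothing) + f(S \cup T)$, which is at least $f(S \cup T)$ by non-negativity of $f$.

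There is essentially no obstacle here: the whole content is bookkeeping once Lemma~\ref{lem:A_OPT_development} is in hand. The only things to double-check are the boundary identifications ($A_0 = \varnothing$, $\capend{T}{0} = T$, $\capend{T}{m} = \varnothing$, the last relying on $T$ being a base) and the use of $f \geq 0$ to drop the $f(\varnothing)$ term at the end. Both inequalities in the corollary then hold deterministically for every permutation $\pi$, matching the remark that Lemma~\ref{lem:A_OPT_development} is itself deterministic.
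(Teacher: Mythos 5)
Your proof is correct and matches the paper's argument exactly: both telescope Lemma~\ref{lem:A_OPT_development} (the paper phrases this as monotonicity of $f(A_i) + f(S \cup A_i \cup \capend{T}{i})$ in $i$) and then substitute $A_0 = \varnothing$, $\capend{T}{0} = T$, $\capend{T}{m} = \varnothing$, using non-negativity of $f$ to drop $f(A_0)$. A minor note: $\capend{T}{m} = \varnothing$ holds for any $T \subseteq \cN$ since $\bigcup_{j=1}^m P_{\pi(j)} = \cN$, so the appeal to $T$ being a base is not actually needed there (it matters only for $\capend{T}{0} = T$ to be useful via $T$ spanning, but even that just needs $T \subseteq \cN$).
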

\begin{proof}
Since $f(A_i) + f(S \cup A_i \cup \capend{T}{i})$ is a non-decreasing function of $i$ by Lemma~\ref{lem:A_OPT_development},
\[
	f(A_m) + f(S \cup A_m \cup \capend{T}{m})
	\geq
	f(A_i) + f(S \cup A_i \cup \capend{T}{i})
	\geq
	f(A_0) + f(S \cup A_0 \cup \capend{T}{0})
	\enspace.
\]
The corollary now follows by recalling that $A_0 = \varnothing$, observing that $f(A_0) \geq 0$ since $f$ is non-negative and noticing that by definition $\capend{T}{m} = \varnothing$ and $\capend{T}{0} = T$.
\end{proof}

By choosing $S = \varnothing$ and $T = OPT$, the last corollary yields $f(A_m) \geq \nicefrac{1}{2} \cdot f(OPT)$, which already proves that Algorithm~\ref{alg:RandomOrderGreedy} is a $\nicefrac{1}{2}$-approximation algorithm as promised. The following lemma strengthens this result by showing a lower bound on the value of $f(A_i)$ for every $0 \leq i \leq m$. Note that this lower bound, unlike the previous one, holds only in expectation over the random choice of the permutation $\pi$. Let $g(x) \triangleq x - x^2/2$.


\begin{lemma} \label{lem:intermidiate_lower_bounds}
For every $0 \leq i \leq m$, $\bE[f(A_i)] \geq g(\nicefrac{i}{m}) \cdot f(OPT)$.
\end{lemma}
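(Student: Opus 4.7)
The plan is to prove the lemma by induction on $i$, with the base case $i=0$ being immediate since $g(0)=0$ and $f$ is non-negative. The key step is to establish the one-step recursion
\[
\bE[f(A_i)] \geq \bE[f(A_{i-1})] + \frac{\max\{0,\, f(OPT) - 2\bE[f(A_{i-1})]\}}{m-i+1},
\]
from which the lemma follows by combining the inductive hypothesis with a short algebraic verification.

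To derive the recursion I would combine three ingredients. First, by the greedy choice at iteration $i$ and the assumption that $OPT$ is a base, $f(A_i) - f(A_{i-1}) \geq f(OPT \cap P_{\pi(i)} \mid A_{i-1})$, where $OPT \cap P_{\pi(i)}$ is a single element. Second, conditional on $\pi(1),\dotsc,\pi(i-1)$, the index $\pi(i)$ is uniform on the $m-i+1$ remaining indices, so $OPT \cap P_{\pi(i)}$ is a uniformly random element of $\capend{OPT}{i-1}$; averaging and applying submodularity yield
\[
\bE[f(A_i) - f(A_{i-1}) \mid \pi(1),\dotsc,\pi(i-1)] \geq \frac{1}{m-i+1} \sum_{u \in \capend{OPT}{i-1}} f(u \mid A_{i-1}) \geq \frac{f(\capend{OPT}{i-1} \mid A_{i-1})}{m-i+1}.
\]
Third, Corollary~\ref{cor:mainineq} applied at step $i-1$ with $S=\varnothing$ and $T=OPT$ gives $f(A_{i-1} \cup \capend{OPT}{i-1}) \geq f(OPT) - f(A_{i-1})$, so $f(\capend{OPT}{i-1} \mid A_{i-1}) \geq f(OPT) - 2 f(A_{i-1})$; combined with the trivial lower bound $0$ (from monotonicity) this gives the $\max$ in the numerator. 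Taking expectation over $\pi(1),\dotsc,\pi(i-1)$ and using Jensen's inequality for the convex function $\max\{0,\cdot\}$ produces the displayed recursion.

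For the induction step, if $\bE[f(A_{i-1})] \geq f(OPT)/2$ then monotonicity alone gives $\bE[f(A_i)] \geq f(OPT)/2 \geq g(i/m) f(OPT)$, since $g$ is increasing on $[0,1]$ with $g(1)=1/2$. Otherwise the recursion simplifies to $\bE[f(A_i)] \geq \bE[f(A_{i-1})] \cdot (m-i-1)/(m-i+1) + f(OPT)/(m-i+1)$; for $i \leq m-1$ the coefficient of $\bE[f(A_{i-1})]$ is non-negative, so the inductive hypothesis can be substituted and it remains to verify
\[
g\!\left(\tfrac{i-1}{m}\right) \cdot \tfrac{m-i-1}{m-i+1} + \tfrac{1}{m-i+1} \geq g\!\left(\tfrac{i}{m}\right),
\]
which, after clearing denominators and expanding, reduces to $m - i + 1 \geq 0$. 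The remaining case $i = m$ is exactly the $\nicefrac{1}{2}$-guarantee $\bE[f(A_m)] \geq f(OPT)/2 = g(1) f(OPT)$ already derived at the end of the basic analysis.

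The main obstacle is identifying the correct averaging step: one must compare the greedy gain not to a fixed optimal element but to the average over $\capend{OPT}{i-1}$, since only then does submodularity convert a sum of singleton marginals into $f(\capend{OPT}{i-1} \mid A_{i-1})$, a quantity directly controlled by Corollary~\ref{cor:mainineq}. A minor technical point is handling the last iteration ($i=m$), where the coefficient of $\bE[f(A_{i-1})]$ in the recursion turns negative and the induction must give way to the Corollary-based bound; the rest is routine algebra.
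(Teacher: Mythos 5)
Your proof is correct and takes essentially the same route as the paper: the one-step recursion is derived from the greedy choice, averaging over the remaining elements of $OPT$, submodularity, and Corollary~\ref{cor:mainineq}, and your algebraic verification that $g\bigl(\tfrac{i-1}{m}\bigr)\cdot\tfrac{m-i-1}{m-i+1}+\tfrac{1}{m-i+1}\geq g\bigl(\tfrac{i}{m}\bigr)$ reduces to $m-i+1\geq 0$ is equivalent to the paper's integral argument (both encode that $g$ is concave and satisfies $g'(x)=(1-2g(x))/(1-x)$). The $\max\{0,\cdot\}$/Jensen refinement and the accompanying case split on whether $\bE[f(A_{i-1})]\geq f(OPT)/2$ are superfluous: for $1\leq i<m$ the coefficient $(m-i-1)/(m-i+1)$ is already non-negative, so the un-truncated recursion $\bE[f(A_i)]\geq \tfrac{m-i-1}{m-i+1}\bE[f(A_{i-1})]+\tfrac{f(OPT)}{m-i+1}$ combined with the inductive hypothesis suffices directly, exactly as in the paper.
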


\begin{proof}
As explained above, for $i = m$ the lemma follows from Corollary~\ref{cor:mainineq}. We prove the lemma for the other values of $i$ by induction. For $i = 0$ the lemma holds, even without the expectation, due to the non-negativity of $f$ since $g(0) = 0$. The rest of the proof is devoted to showing that the lemma holds for $1 \leq i < m$ assuming that it holds for $i - 1$.

Let $\pi_{i-1}$ be an arbitrary injective function from $\{1, \dotsc, i- 1\}$ to $\{1, \dotsc, m\}$, and let us denote by $\cE(\pi_{i-1})$ the event that $\pi(j) = \pi_{i-1}(j)$ for every $1 \leq j \leq i - 1$. Observe that conditioned on this event the sets $A_{i-1}$ and $\capend{OPT}{i-1}$ become deterministic. For $\capend{OPT}{i-1}$ this follows from the definition, and for $A_{i - 1}$ this is true because Algorithm~\ref{alg:RandomOrderGreedy} uses the values of $\pi$ only for the numbers in $\{1, \dotsc, i-1\}$ for constructing $A_{i - 1}$. Thus, conditioned on $\cE(\pi_{i-1})$,
\begin{align*}
	\bE[f(A_i) - f(A_{i - 1})]
	={} &
	\bE[f(u_i \mid A_{i-1})] \geq  \bE[f(OPT \cap P_{\pi(i)} \mid A_{i-1})]
	=
	\frac{\sum_{u \in \capend{OPT}{i - 1}} f(u \mid A_{i-1})}{m - i + 1}\\
	\geq{} &
	\frac{f(\capend{OPT}{i - 1} \mid A_{i-1})}{m - i + 1}
	\geq
	\frac{f(OPT) - 2f(A_{i-1})}{m - i + 1}
	\enspace,
\end{align*}
where the first inequality follows from the greedy choice of Algorithm~\ref{alg:RandomOrderGreedy}, the second equality holds since the conditioning on $\cE(\pi_{i-1})$ implies that $OPT \cap P_{\pi(i)}$ is a uniformly random element of $OPT_{i - 1}$, the second inequality follows from the submodularity of $f$ and the last inequality follows from the second inequality of Corollary~\ref{cor:mainineq} by choosing $S = \varnothing$ and $T = OPT$.

Now, taking expectation over all the possible choices of $\pi_{i-1}$, we get
\begin{align*}
	\bE[f(A_i)]
	\geq{} &
	\bE[f(A_{i-1})] + \frac{f(OPT) - 2\bE[f(A_{i-1})]}{m - i + 1}
	=
	\frac{m - i - 1}{m - i + 1} \cdot \bE[f(A_{i-1})] + \frac{f(OPT)}{m - i + 1}\\
	\geq{} &
	\frac{m - i - 1}{m - i + 1} \cdot g\left(\frac{i - 1}{m}\right) \cdot f(OPT) + \frac{f(OPT)}{m - i + 1}
	=
	\left[g\left(\frac{i - 1}{m}\right) + \frac{1 - 2g(\frac{i-1}{m})}{m - i + 1} \right] \cdot f(OPT)
	\enspace,
\end{align*}
where the second inequality follows from the induction hypothesis (since $i \leq m - 1$). Using the observations that the derivative $g'(x) = 1 - x$ of $g(x)$ is non-increasing and obeys $g'(x) = (1 - 2g(x)) / (1 - x)$, the last inequality yields
\begin{align*}
	\frac{\bE[f(A_i)]}{f(OPT)}
	\geq{} &
	g\left(\frac{i - 1}{m}\right) + \frac{1 - 2g(\frac{i-1}{m})}{m - i + 1}
	=
	g\left(\frac{i - 1}{m}\right) + \frac{g'(\frac{i-1}{m})}{m}\\
	\geq{} &
	g\left(\frac{i - 1}{m}\right) + \int_{(i - 1)/m}^{i/m} g'(x) dx
	=
	g(\nicefrac{i}{m})
	\enspace.
	\qedhere
\end{align*}
\end{proof}


\subsection{Breaking $\nicefrac{1}{2}$: An Improved Analysis of Algorithm~\ref{alg:RandomOrderGreedy}} \label{ssc:break_half}



In this section, we use the properties of Algorithm~\ref{alg:RandomOrderGreedy} proved in the previous section to derive a better than $\nicefrac{1}{2}$ lower bound on its approximation ratio and prove Theorem~\ref{thm:lower_bound_ratio}.
As explained in Section~\ref{ssc:technique}, we view here an execution of Algorithm~\ref{alg:RandomOrderGreedy} as consisting of three stages, where the places of transition between the stages are defined by two integer parameters $0 < r \leq r' < m$ whose values are chosen later in this section.
The first lemma that we present (Lemma~\ref{lem:C_exists}) uses a symmetry argument to prove  that there are two (not necessarily distinct) independent sets of $\cM$ that consist only of elements that Algorithm~\ref{alg:RandomOrderGreedy} can pick in its second and third stages (the final $m-r$ iterations), and in addition, the value of their union is large. One of these sets is $\capend{OPT}{r}$, and the other set is obtained by applying to $\capend{OPT}{r}$ an appropriately chosen function $h$.

Let $c$ be the true (unknown) approximation ratio of Algorithm~\ref{alg:RandomOrderGreedy}.

\begin{lemma} \label{lem:C_exists}
There exists a function $h\colon 2^\cN \to 2^\cN$ such that
\begin{enumerate}[(a)]
	\item for every $1 \leq i \leq m$ and set $S \subseteq \cN$, $|P_i \cap h(S)| = |P_i \cap S|$. \label{item:feasibility}
	\item $\bE[f(h(\capend{OPT}{r}) \cup \capend{OPT}{r})] \geq f(OPT) - c^{-1} \cdot \bE[f(A_m \mid A_{m-r})]$. \label{item:value}
\end{enumerate}
\end{lemma}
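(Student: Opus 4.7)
The plan is to construct $h$ via the symmetry of the uniform random permutation under reversal, and then verify~(b) using Corollary~\ref{cor:mainineq}. Let $\pi^R$ denote the reversed permutation $\pi^R(i) := \pi(m+1-i)$; since $\pi$ is uniformly random, so is $\pi^R$. I would define $h(\capend{OPT}{r}) := A^R_{m-r}$, where $A^R_j$ is the set produced by Algorithm~\ref{alg:RandomOrderGreedy} run with permutation $\pi^R$ for its first $j$ iterations. The first $m-r$ iterations of $\pi^R$ process exactly the blocks $P_{\pi(r+1)},\dots,P_{\pi(m)}$ (the same blocks in which $\capend{OPT}{r}$ resides), so $A^R_{m-r}$ contains exactly one element from each such block and none elsewhere, yielding condition~(a) immediately. (Strictly speaking, this makes $h$ depend on $\pi$ as well as on $\capend{OPT}{r}$; one can either read the lemma as allowing this dependence, or invoke a standard averaging step to select a single deterministic realization.)

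To verify~(b), I would apply Corollary~\ref{cor:mainineq} to the forward run at $i = r$, with $S := A^R_{m-r}$ and $T := OPT$. Since $\capend{T}{r} = \capend{OPT}{r}$, the inequality chain in the Corollary yields, pointwise, $f(A_r) + f(A_r \cup A^R_{m-r} \cup \capend{OPT}{r}) \geq f(OPT)$. Because $A_r$ lives entirely in the first $r$ blocks of $\pi$ --- disjoint from the blocks spanned by $A^R_{m-r} \cup \capend{OPT}{r}$ --- submodularity (via the subadditive bound $f(X \cup Y) \leq f(X) + f(Y)$) lets me peel $A_r$ out of the middle term, yielding $f(A^R_{m-r} \cup \capend{OPT}{r}) \geq f(OPT) - 2 f(A_r)$. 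Taking expectation over $\pi$ then gives $\bE[f(h(\capend{OPT}{r}) \cup \capend{OPT}{r})] \geq f(OPT) - 2\,\bE[f(A_r)]$.

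The final step is to convert the $\bE[f(A_r)]$ error term into the form $c^{-1}\bE[f(A_m \mid A_{m-r})]$ appearing in~(b). Here I would exploit the reverse-permutation symmetry once more: the approximation guarantee on the reverse run gives $\bE[f(A^R_m)] \geq c \cdot f(OPT)$, and the distributional identities $\bE[f(A^R_r)] = \bE[f(A_r)]$ and $\bE[f(A^R_m \mid A^R_{m-r})] = \bE[f(A_m \mid A_{m-r})]$ (from the exchangeability of $\pi$ and $\pi^R$), together with a decomposition of $f(A^R_m)$ at the cut after $m-r$ iterations, should yield the desired relation between $\bE[f(A_r)]$ and $\bE[f(A_m \mid A_{m-r})]$ with the factor $c^{-1}$.

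The main obstacle is precisely this final conversion, where the naive peeling step introduces an unwanted factor (producing $2\bE[f(A_r)]$ rather than $\bE[f(A_r)]$ as the error), and so fails to match the stated target with just a single $c^{-1}$. Obtaining the tight constant will likely require either a sharper choice of the triple $(S,T,i)$ in Corollary~\ref{cor:mainineq} that bypasses the subadditive peeling entirely, or a more intricate construction of $h$ that leverages the exchangeability of the first $r$ and last $r$ positions of $\pi$ at a finer granularity --- e.g., by combining runs on $\pi$ and $\pi^R$ via a coupling rather than defining $h$ through $A^R_{m-r}$ alone. Getting this algebra right is the technical heart of the lemma.
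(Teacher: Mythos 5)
Your proposal has the right overall shape --- construct a feasible independent set with the same partition footprint as $\capend{OPT}{r}$ and show its union with $\capend{OPT}{r}$ has large value, then use a symmetry of the random permutation --- but it misses the one idea that actually produces the $c^{-1}\cdot\bE[f(A_m\mid A_{m-r})]$ error term, and as you yourself note, the bound you actually derive is too weak to recover.

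The gap is concrete. Your chain gives $\bE[f(A^R_{m-r}\cup\capend{OPT}{r})]\ge f(OPT)-2\bE[f(A_r)]$. Since $\bE[f(A_r)]$ is on the order of $g(r/m)\cdot f(OPT)$ (a large quantity: the first $r$ iterations collect most of the value), this bound is far weaker than the lemma's target, where the error $c^{-1}\bE[f(A_m\mid A_{m-r})]$ involves only the gain of the \emph{last} $r$ iterations (a small quantity). The conversion you hope for --- relating $\bE[f(A_r)]$ to $\bE[f(A_m\mid A_{m-r})]$ via the global guarantee $\bE[f(A_m^R)]\ge c\cdot f(OPT)$ and the trivial reversal identities --- cannot work: those identities just say $\bE[f(A^R_j)]=\bE[f(A_j)]$ and $\bE[f(A^R_m\mid A^R_{m-r})]=\bE[f(A_m\mid A_{m-r})]$, which give no handle on turning a ``large'' quantity into a ``small'' one. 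The factor of $2$ also appears because you route through Corollary~\ref{cor:mainineq} at $i=r$ and then peel off $A_r$ by subadditivity; the paper never introduces $A_r$ into this part of the argument at all.

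The key idea in the paper's proof, which is absent from yours, is a \emph{self-referential} application of the approximation ratio $c$: conditioned on the first $m-r$ entries of $\pi$, the last $r$ iterations of the algorithm are themselves an execution of \textsf{Random Order Greedy} on the partition matroid defined by the remaining $r$ blocks with objective $f(\cdot\mid A_{m-r})$, and $\capend{OPT}{m-r}$ is a feasible solution of that sub-instance. Hence $\bE[f(\capend{OPT}{m-r}\mid A_{m-r})]\le c^{-1}\cdot\bE[f(A_m\mid A_{m-r})]$ by the very definition of $c$. Combining this with the direct estimate $f(A_{m-r}\cup(OPT\setminus\capend{OPT}{m-r}))\ge f(OPT)-f(\capend{OPT}{m-r}\mid A_{m-r})$ (via Observation~\ref{obs:sub_mon_combine} and monotonicity, with no $f(A_r)$ term) and then using Observation~\ref{obs:opt_r_distribution} to swap $OPT\setminus\capend{OPT}{m-r}$ for the identically distributed $\capend{OPT}{r}$ gives exactly the claimed inequality. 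Finally, a cleaner fix for your ``$h$ depends on $\pi$'' concern: the paper defines $h(S)$ deterministically as an argmax of $f(T\cup S)$ over all $T$ with the same per-block cardinalities as $S$, so that $A_{m-r}$ is merely one of the competitors that $h(OPT\setminus\capend{OPT}{m-r})$ dominates; no averaging argument is needed.
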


\begin{proof}
Given Part~\eqref{item:value} of the lemma, it is natural to define $h(S)$, for every set $S \subseteq \cN$, as the set $T$ maximizing $f(T \cup S)$ among all the sets obeying Part~\eqref{item:feasibility} of the lemma (where ties are broken in an arbitrary way). In the rest of the proof we show that this function indeed obeys Part~\eqref{item:value}.

Observe that
\begin{align*}
	f(A_{m - r} \cup{} & (OPT \setminus \capend{OPT}{m-r}))
	=
	f(OPT \setminus \capend{OPT}{m-r} \mid A_{m - r}) +f(A_{m-r}) \\
	&\geq
	f(OPT \setminus \capend{OPT}{m-r} \mid A_{m - r} \cup \capend{OPT}{m-r}) + f(A_{m-r})\\
	&=
	f(A_{m - r} \cup OPT) - f(\capend{OPT}{m-r} \mid A_{m - r})
	\geq
	f(OPT) - f(\capend{OPT}{m-r} \mid A_{m - r})
	\enspace,
\end{align*}
where first inequality follows from Observation~\ref{obs:sub_mon_combine} and the second follows by the monotonicity of $f$.
We now note that the last $r$ iterations of Algorithm~\ref{alg:RandomOrderGreedy} can be viewed as a standalone execution of this algorithm on the partition matroid defined by the sets $P_{m - r + 1}, \dotsc, P_m$ and the objective function $f(\cdot \mid A_{m-r})$. Thus, by the definition of $c$, the expected value of $f(\capend{OPT}{m-r} \mid A_{m - r})$ is at most $c^{-1} \cdot \bE[f(A_m \setminus A_{m-r} \mid A_{m-r})] = c^{-1} \cdot \bE[f(A_m \mid A_{m-r})]$. Combining this with the previous inequality, we get
\begin{align*}
	\bE[f(h(OPT \setminus{} & \capend{OPT}{m-r}) \cup (OPT \setminus \capend{OPT}{m-r}))]
	\geq
	\bE[f(A_{m-r} \cup (OPT \setminus \capend{OPT}{m-r}))]\\
	\geq{} &
	\bE[f(OPT) - f(\capend{OPT}{m-r} \mid A_{m - r})]
	\geq
	f(OPT) - c^{-1} \cdot \bE[f(A_m \mid A_{m-r})]
	\enspace,
\end{align*}
where the first inequality holds due to the definition of $h$ since $A_{m-r}$ obeys Part~\eqref{item:feasibility} of the lemma (for $S = OPT \setminus \capend{OPT}{m-r}$).

To prove the lemma it remains to observe that by Observation~\ref{obs:opt_r_distribution} the random sets $OPT \setminus \capend{OPT}{m-r}$ and $\capend{OPT}{r}$ have the same distribution, which implies that $f(h(OPT\setminus \capend{OPT}{m-r}) \cup (OPT \setminus \capend{OPT}{m-r}))$ and $f(h(\capend{OPT}{r}) \cup \capend{OPT}{r})$ have the same expectation.
\end{proof}

Let us denote $C = h(\capend{OPT}{r})$. Note that $C$ is a random set since $\capend{OPT}{r}$ is. The following lemma uses the properties of $C$ proved by Lemma~\ref{lem:C_exists} to show that Algorithm~\ref{alg:RandomOrderGreedy} must make a significant gain during its third stage.

\begin{lemma}\label{lem:guarantee}
Let $p = \frac{m - r'}{m - r}$, then
\[
	\bE[f(A_m \mid A_{r'})]
	\geq
	f(OPT) - (2+p/c) \cdot \bE[f(A_m)] + p \cdot \bE[f(A_r)] + (p/c) \cdot \bE[f(A_{m-r})]
	\enspace.
\]
\end{lemma}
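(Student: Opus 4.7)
The plan is to combine the deterministic telescoping inequality of Lemma~\ref{lem:A_OPT_development} with Lemma~\ref{lem:sampling} and Lemma~\ref{lem:C_exists}, exploiting a careful choice of $S$, $T$, and iteration index $i$. I would begin by applying Lemma~\ref{lem:A_OPT_development} iteratively from iteration $r'$ to iteration $m$ with $S = \capend{OPT}{r}$ and $T = A_r \cup C$ (which is a base of $\cM$, since by Lemma~\ref{lem:C_exists}(a) the elements of $C = h(\capend{OPT}{r})$ live in the partition sets $P_{\pi(j)}$ with $j > r$, which are disjoint from the supports of $A_r$). Noting that $\capend{T}{r'} = \capend{C}{r'}$ and $\capend{T}{m} = \varnothing$, this yields the deterministic inequality
\[
f(A_m \mid A_{r'}) \geq f(\capend{OPT}{r} \cup A_{r'} \cup \capend{C}{r'}) - f(\capend{OPT}{r} \cup A_m),
\]
and taking expectations reduces the task to lower bounding the first term and upper bounding the second.

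For the lower bound on $\bE[f(\capend{OPT}{r} \cup A_{r'} \cup \capend{C}{r'})]$, I would first drop $A_{r'}$ down to $A_r$ by monotonicity, then condition on $\pi(1), \ldots, \pi(r)$. Under this conditioning, $A_r$, $\capend{OPT}{r}$, and $C$ are all fixed, while $\capend{C}{r'}$ is a uniformly random size-$(m-r')$ subset of $C$ in which each element appears with marginal probability $p$. Applying Lemma~\ref{lem:sampling} to the submodular function $g(S) = f(A_r \cup \capend{OPT}{r} \cup S)$ and then integrating out the conditioning yields the convex combination $(1-p)\,\bE[f(A_r \cup \capend{OPT}{r})] + p\,\bE[f(A_r \cup \capend{OPT}{r} \cup C)]$. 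The first summand is lower bounded by $f(OPT) - \bE[f(A_r)]$ via the second inequality of Corollary~\ref{cor:mainineq} (with $S = \varnothing$, $T = OPT$, $i = r$), and the second summand is lower bounded by $f(OPT) - c^{-1}\bE[f(A_m \mid A_{m-r})]$ by monotonicity (to drop $A_r$) followed by Lemma~\ref{lem:C_exists}(b). Summing the two contributions produces the bound $\bE[f(\capend{OPT}{r} \cup A_{r'} \cup \capend{C}{r'})] \geq f(OPT) - (1-p)\bE[f(A_r)] - (p/c)\bE[f(A_m \mid A_{m-r})]$.

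The main obstacle is obtaining an upper bound on $\bE[f(\capend{OPT}{r} \cup A_m)]$ tight enough to produce the coefficient $-(2 + p/c)$ on $\bE[f(A_m)]$: a na\"ive estimate via $f(\capend{OPT}{r} \cup A_m) \leq f(A_m) + f(\capend{OPT}{r})$ combined with $f(\capend{OPT}{r}) \leq 2f(A_m) - f(A_r)$ yields only $3\bE[f(A_m)] - \bE[f(A_r)]$, which is off by a factor of $\bE[f(A_m)]$. The key trick is to invoke Corollary~\ref{cor:mainineq} \emph{directly} with $S = A_m$, $T = OPT$, and $i = r$. Since $A_r \subseteq A_m$ forces $S \cup A_i = A_m$, and since $\capend{T}{r} = \capend{OPT}{r}$, the corollary collapses to $2 f(A_m) \geq f(A_r) + f(\capend{OPT}{r} \cup A_m)$, giving the sharp bound $\bE[f(\capend{OPT}{r} \cup A_m)] \leq 2\bE[f(A_m)] - \bE[f(A_r)]$.

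Substituting both bounds into the main inequality and simplifying, the $-(1-p)\bE[f(A_r)]$ contribution from the lower bound combines with the $+\bE[f(A_r)]$ contribution from the upper bound to produce exactly $+p\bE[f(A_r)]$, while expanding $-(p/c)\bE[f(A_m \mid A_{m-r})] = -(p/c)\bE[f(A_m)] + (p/c)\bE[f(A_{m-r})]$ together with the $-2\bE[f(A_m)]$ contribution yields the coefficient $-(2 + p/c)$ on $\bE[f(A_m)]$, reproducing the claim of the lemma.
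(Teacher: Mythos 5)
Your proposal is correct and follows essentially the same route as the paper's proof. You apply Lemma~\ref{lem:A_OPT_development} telescoped from $i=r'$ to $i=m$ with $S=\capend{OPT}{r}$ and $T=A_r\cup C$, whereas the paper invokes the resulting Corollary~\ref{cor:mainineq} directly with $S=A_m\cup\capend{OPT}{r}$ and the same $T$; after you drop $A_{r'}$ to $A_r$ by monotonicity and the paper drops $A_m\cup\capend{C}{r'}$ to $\capend{C}{r'}$ via Observation~\ref{obs:sub_mon_combine}, the two derivations reach the identical intermediate bound $\bE[f(A_m\mid A_{r'})]\geq\bE[f(A_r\cup\capend{OPT}{r}\cup\capend{C}{r'})]-\bE[f(A_m\cup\capend{OPT}{r})]$. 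From there both arguments condition on $\cE(\pi_r)$, apply Lemma~\ref{lem:sampling} to $\capend{C}{r'}\subseteq C$, invoke Lemma~\ref{lem:C_exists}(b), and close with the same two instantiations of Corollary~\ref{cor:mainineq} (with $S=\varnothing$ and $S=A_m$ respectively, $T=OPT$, $i=r$). The bookkeeping differs slightly in where the monotonicity steps sit, but this is a presentational variation rather than a different proof.
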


\begin{proof}
Observe that
\begin{align*}
	f(A_m \mid A_{r'})
	\geq{} &
	f(\capend{C}{r'} \mid A_m \cup \capend{OPT}{r})
	=
	f(A_m \cup \capend{C}{r'} \mid A_r \cup \capend{OPT}{r}) - f(A_m \mid A_r \cup \capend{OPT}{r}) \\
	\geq{} &
	f(\capend{C}{r'} \mid A_r \cup \capend{OPT}{r}) - f(A_m \mid A_r \cup \capend{OPT}{r})
	\enspace,
\end{align*}
where the first inequality follows by plugging $i = r'$, $T = A_r \cup C$ and $S = A_m \cup \capend{OPT}{r}$ in the first inequality of Corollary~\ref{cor:mainineq}, and the second inequality follows by Observation~\ref{obs:sub_mon_combine}.

Similar to what we do in the proof of Lemma~\ref{lem:intermidiate_lower_bounds}, let us now denote by $\pi_r$ an arbitrary injective function from $\{1, \dotsc, r\}$ to $\{1, \dotsc, m\}$ and by $\cE(\pi_r)$ the event that $\pi(j) = \pi_r(j)$ for every $1 \leq j \leq r$. Observe that conditioned on $\cE(\pi_r)$ the set $\capend{OPT}{r}$ is deterministic, and thus so is the set $C$ which is obtained from $\capend{OPT}{r}$ by the application of a deterministic function; but $\capend{C}{r'}$ remains a random set that contains every element of $C$ with probability $p$. Hence, by Lemma~\ref{lem:sampling}, conditioned on $\cE(\pi_r)$, we get
\[
	\bE[f(\capend{C}{r'} \mid A_r \cup \capend{OPT}{r})]
	\geq
	p \cdot f(C \mid A_r \cup \capend{OPT}{r})
	\enspace.
\]
Taking now expectation over all the possible events $\cE(\pi_r)$, and combining with the previous inequality, we get
\begin{align*}
	\bE[f(A_m \mid A_{r'})]
	&\geq
	p \cdot \bE[f(C \mid A_r \cup \capend{OPT}{r})] - \bE[f(A_m \mid A_r \cup \capend{OPT}{r})]\\
	={} &
	p \cdot \bE[f(C \cup A_r \cup \capend{OPT}{r})] + (1 - p) \cdot \bE[f(A_r \cup \capend{OPT}{r})] - \bE[f(A_m \cup \capend{OPT}{r})]\\
	\geq{} &
	p \cdot \bE[f(C \cup \capend{OPT}{r})] + (1 - p) \cdot \bE[f(A_r \cup \capend{OPT}{r})] - \bE[f(A_m \cup \capend{OPT}{r})]
	\enspace,
\end{align*}
where the second inequality holds due to the monotonicity of $f$. We now need to bound all the terms on the right hand side of the last inequality. The first term is lower bounded by $p \cdot f(OPT) - (p/c) \cdot \bE[f(A_m \mid A_{m-r})]$ due to Lemma~\ref{lem:C_exists}. A lower bound of $f(OPT) - f(A_r)$ on the expression $f(A_r \cup \capend{OPT}{r})$ follows from the second inequality of Corollary~\ref{cor:mainineq} by setting $T = OPT$ and $S = \varnothing$. Finally, an upper bound of $2f(A_m) - f(A_r)$ on the expression $f(A_m \cup \capend{OPT}{r})$ follows from the first inequality of the same corollary by setting $T = OPT$ and $S = A_m$. Plugging all these bounds into the previous inequality yields
\begin{align*}
	\bE[f(&A_m \mid A_{r'})]\\
	\geq{} &
	p \cdot f(OPT) - (p/c) \cdot \bE[f(A_m \mid A_{m-r})] + (1 - p) \cdot \bE[f(OPT) - f(A_r)] - \bE[2f(A_m) - f(A_r)]\\
	={} &
	f(OPT) - (2 + p/c) \cdot \bE[f(A_m)] + p \cdot \bE[f(A_r)] + (p/c) \cdot \bE[f(A_{m-r})]
	\enspace.
	\qedhere
\end{align*}
\end{proof}

We are now ready to prove Theorem~\ref{thm:lower_bound_ratio}.


\begin{proof}[Proof of Theorem \ref{thm:lower_bound_ratio}]
Let $q = r/m$. By Lemma~\ref{lem:guarantee},
\begin{align*}
	\bE[f(A_m)]
	={} &
	\bE[f(A_m \mid A_{r'})] + \bE[f(A_{r'})]\\
	\geq{} &
	f(OPT) - (2+p/c) \cdot \bE[f(A_m)] + p \cdot \bE[f(A_r)] + (p/c) \cdot \bE[f(A_{m-r})]+ \bE[f(A_{r'})]
	\enspace.
\end{align*}
Rearranging this inequality, and using the lower bound on $\bE[f(A_i)]$ given by Lemma~\ref{lem:intermidiate_lower_bounds}, we get
\begin{align*}
	(3 + p/c) \cdot \bE[f(A_m)]
	\geq{} &
	[1 + p \cdot g(q) + (p/c) \cdot g(1 - q) + g(1 - p + pq)] \cdot f(OPT)\\
	={} &
	[1 + pq(1 - q/2) + (p/c)(1 - q^2)/2 + (1 - p^2 + 2p^2q - p^2q^2)/2] \cdot f(OPT)\\
	={}&
	\frac{1}{2}[3 + pq(2 - q) + pc^{-1}(1 - q^2) - p^2(1 - q)^2] \cdot f(OPT)
	\enspace.
\end{align*}
Thus, the approximation ratio of Algorithm~\ref{alg:RandomOrderGreedy} is at least
\[
	\frac{3 + pq(2 - q) + pc^{-1}(1 - q^2) - p^2(1 - q)^2}{6 + 2pc^{-1}}
	\enspace.
\]

Since $c$ is the true approximation ratio of this algorithm by definition, we get
\[
	c \geq \frac{3 + pq(2 - q) + pc^{-1}(1 - q^2) - p^2(1 - q)^2}{6 + 2pc^{-1}}
	\enspace.
\]
We now choose $p = q = 0.4$. Notice that these values for $p$ and $q$ can be achieved by setting $r = \alpha m$ and $r' = \beta m$ for an appropriate choice of $0 < \alpha < \beta < 1$, and moreover, we can assume that this is a valid choice for $r$ and $r'$ by Reduction~\ref{red:integral}. Plugging these values of $p$ and $q$ into the last inequality and simplifying, we get $6c^2 - 2.3984c - 0.336 \geq 0$.
One can verify that all the positive solutions for this inequality are larger than $0.5096$, which completes the proof of the theorem.
%
%
%
%
\end{proof}

\section{Upper Bounding the Approximation Ratio} \label{sec:upper-bound}

In this section, we prove a weaker form of Theorem~\ref{thm:upper_bound_ratio}, with a bound of $7/12 \approx 0.583$ instead of $19/33 \approx 0.576$. The proof of the theorem as stated appears in the appendix. 

	To prove the weaker version of the theorem, we construct a partition matroid over a ground set $\cN$ consisting of twelve elements and a non-negative monotone submodular function $f\colon 2^\cN \to \nnR$ over the same ground set. The partition matroid is defined by a partition of the ground set into three sets: $P_x=\{x_1, x_2, x_3, x_4\}$, $P_y=\{y_1, y_2, y_3, y_4\}$ and $P_z=\{z_1, z_2, z_3, z_4\}$.
	To define the function $f$, we view each element of $\cN$ as a subset of an underlying universe $\cU$ consisting of $12$ elements: \[\cU=\{\alpha_1,\cdots,\alpha_4, \beta_1, \cdots, \beta_4, \gamma_1, \cdots, \gamma_4\}.\] The function $f$ is then given as the coverage function $f(S)=|\bigcup_{u\in S} u|$ (coverage functions are known to be non-negative, monotone and submodular). The following table completes the definition of $f$ by specifying the exact subset of $\cU$ represented by each element of $\cN$:

	\begin{center}
		{\footnotesize
		\begin{tabular}{ l | l | l }
			\multicolumn{1}{c|}{Elements of $P_x$} & \multicolumn{1}{c|}{Elements of $P_y$} & \multicolumn{1}{c}{Elements of $P_z$}\\
			\hline&&\\[-2.8mm]
			$x_1=\{\alpha_1,\alpha_2,\alpha_3,\alpha_4\}$ &
			$y_1=\{\beta_1,\beta_2,\beta_3,\beta_4\}$ &
			$z_1=\{\gamma_1,\gamma_2,\gamma_3,\gamma_4\}$ \\
			$x_2=\{\beta_1,\beta_2,\gamma_1,\gamma_2\}$ & 
			$y_2=\{\alpha_1,\alpha_2,\gamma_1,\gamma_2\}$ &
			$z_2=\{\alpha_1,\alpha_2,\beta_1,\beta_2\}$ \\
			$x_3=\{\beta_1,\gamma_3\}$ &
			$y_3=\{\alpha_1,\gamma_3\}$ &
			$z_3=\{\alpha_1,\beta_3\}$ \\
			$x_4=\{\beta_3,\gamma_1\}$ &
			$y_4=\{\alpha_3,\gamma_1\}$ &
			$z_4=\{\alpha_3,\beta_1\}$
		\end{tabular}
		}
	\end{center}
	
	It is easy to verify that the optimum solution for this instance (\ie, the independent set of $\cM$ maximizing $f$) is the set $\{x_1,y_1,z_1\}$ whose value is $12$. To analyze the performance of Algorithm~\ref{alg:RandomOrderGreedy} on this instance, we must set a tie breaking rule. Here we assume that the algorithm always breaks ties in favor of the element with the higher index, but it should be noted that a small perturbation of the values of $f$ can be used to make the analysis independent of the tie breaking rule used (at the cost of weakening the impossibility proved by an additive $\eps$ term for an arbitrary small constant $\eps > 0$).
	
Now, consider the case that the set $P_x$ arrives first, followed by $P_y$ and finally $P_z$. One can check that in this case the greedy algorithm picks the elements $x_2$, $y_3$ and $z_4$ (in this order), and that their marginal contributions upon arrival are $4$, $2$ and $1$, respectively. Similarly, it can be checked that the exact same marginal contributions also appear in every one of the other five possible arrival orders of the sets $P_x, P_y, P_z$. 
Thus, regardless of the arrival order, the approximation ratio achieved by Algorithm~\ref{alg:RandomOrderGreedy} for the above instance is only
\[
	\frac{4+2+1}{12}
	=
	\frac{7}{12}
	\enspace.
\]

\noindent \textbf{Remark:} It should be noted that by combining multiple independent copies of the above described instance, one can get an arbitrarily large instance for which the approximation ratio of Algorithm~\ref{alg:RandomOrderGreedy} is only $7/12$. This rules out the possibility that the approximation ratio of Algorithm~\ref{alg:RandomOrderGreedy} approaches $1 - \nicefrac{1}{e}$ for large enough instances.

%
%
\section*{Acknowledgment}

We thank Nitish Korula, Vahab S. Mirrokni and Morteza Zadimoghaddam for sharing with us the full version of their paper~\cite{KMZ15}.

\bibliographystyle{plain}
\bibliography{SubmodularMax}

\newpage
\appendix
\section{Stronger Upper Bound on the Approximation Ratio}

In this appendix we prove Theorem~\ref{thm:upper_bound_ratio} with the stated bound of $19/33$. The proof is similar to the one given in Section~\ref{sec:upper-bound}, but the set system we need to use is more complicated.

We construct a partition matroid over a ground set $\cN$ consisting of $32$ elements and a non-negative monotone submodular function $f\colon 2^\cN \to \nnR$ over the same ground set. The ground set of the partition matroid consists of four types of elements:
\begin{itemize}
	\item $o_1, o_2, o_3, o_4$
	\item $x_1, x_2, x_3, x_4$
	\item $y_{ij}$ for distinct $i,j \in \{1,2,3,4\}$
	\item $z_{ijk}$	for distinct $i,j,k \in \{1,2,3,4\}$, where we ignore the order between the first two indices (\ie, $z_{ijk}$ and $z_{jik}$ are two names for the same element)
\end{itemize}
The partition matroid is defined by a partition of the ground set into four parts $P_1,P_2,P_3,P_4$, where part $P_i$ is comprised of the $8$ elements of the forms $o_i, x_i, y_{ji}, z_{jki}$.

As in Section~\ref{sec:upper-bound}, the function $f$ is a coverage function, but this time a weighted one. The universe $\cU$ used to define this function consists of $28$ elements:
\[
 \cU = \{ a_i, b_i, c_i, d_i, e_i, f_i, g_i : i \in \{1,2,3,4\} \}.
\]
The weights of the elements in this universe are given by the following function $w\colon \cU \to \nnR$:
\[
\begin{array}{c|ccccccc}
	v & a_i & b_i & c_i & d_i & e_i & f_i & g_i \\\hline
	w(v) & 14 & 14 & 8 & 5 & 4 & 7 & 14
\end{array}
\]
We extend $w$ to subsets of $\cU$ by defining $w(V) = \sum_{v \in V} w(v)$.
The function $f$ is then the weighted coverage function given by the formula $f(S) = w(\bigcup_{v \in S} v)$. Like coverage functions, weighted coverage functions (with non-negative weights) are also known to be non-negative, monotone and submodular.

In order to complete the definition of $f$, we need to specify the sets that the elements of $\cN$ correspond to:
\begin{align*}
o_i &= \{a_i,b_i,c_i,d_i,e_i,f_i,g_i\} \\
x_i &= \{b_j,c_j : j \neq i\} \\
y_{ij} &= \{c_i, e_j\} \cup \{d_k,e_k,f_k : k \neq i,j\} \\	
z_{ijk} &= \{f_i,f_j,g_\ell\}, \text{ where } \ell \text{ is the unique element of } \{1,2,3,4\} \setminus \{i,j,k\}
\end{align*}

It is easy to verify that the optimal solution for this instance is $\{o_1,o_2,o_3,o_4\}$ and that it achieves a total weight of $264$. In contrast, by choosing an appropriate tie breaking rule, we can cause Algorithm~\ref{alg:RandomOrderGreedy} to act in the following way when presented with the parts $P_i,P_j,P_k,P_\ell$: choose $x_i,y_{ij},z_{ijk},o_\ell$ (which has total weight 152). In order to demonstrate this, we analyze below the working of the algorithm when the parts are presented in the order $P_1,P_2,P_3,P_4$. But first, let us notice that
\[
 x_1 \cup y_{12} \cup z_{123} \cup o_4 =
 \{ c_1,f_1,b_2,c_2,e_2,f_2,b_3,c_3,d_3,e_3,f_3,a_4,b_4,c_4,d_4,e_4,f_4,g_4 \}
\]
has total weight $152$, which is smaller than the optimum by a factor of $152/264 = 19/33$.

\pagebreak

At the first step, the algorithm has the following options (when several options are equivalent, only one is presented):
\begin{enumerate}
	\item $o_1 = \{a_1,b_1,c_1,d_1,e_1,f_1,g_1\}$, total weight $66$.
	\item $x_1 = \{b_2,c_2,b_3,c_3,b_4,c_4\}$, total weight $66$.
	\item $y_{21} = \{c_2,e_1,d_3,e_3,f_3,d_4,e_4,f_4\}$, total weight $44$.
	\item $z_{321} = \{f_3,f_2,g_4\}$, total weight $28$.
\end{enumerate}
Therefore $x_1$ is a legitimate choice.
At the second step the options are:
\begin{enumerate}
	\item $o_2 \setminus x_1 = \{a_2,d_2,e_2,f_2,g_2\}$, total weight $44$.
	\item $x_2 \setminus x_1 = \{b_1,c_1\}$, total weight $22$.
	\item $y_{12} \setminus x_1 = \{c_1,e_2,d_3,e_3,f_3,d_4,e_4,f_4\}$, total weight $44$.
	\item $y_{32} \setminus x_1 = \{e_2,d_1,e_1,f_1,d_4,e_4,f_4\}$, total weight $36$.
	\item $z_{132} \setminus x_1 = \{f_1,f_3,g_4\}$, total weight $28$.
	\item $z_{342} \setminus x_1 = \{f_3,f_4,g_1\}$, total weight $28$.
\end{enumerate}
Therefore $y_{12}$ is a legitimate choice.
At the third step the options are:
\begin{enumerate}
	\item $o_3 \setminus (x_1 \cup y_{12}) = \{a_3,g_3\}$, total weight $28$.
	\item $x_3 \setminus (x_1 \cup y_{12}) = \{b_1\}$, total weight $14$.
	\item $y_{13} \setminus (x_1 \cup y_{12}) = \{d_2,f_2\}$, total weight $12$.
	\item $y_{23} \setminus (x_1 \cup y_{12}) = \{d_1,e_1,f_1\}$, total weight $16$.
	\item $y_{43} \setminus (x_1 \cup y_{12}) = \{d_1,e_1,f_1,d_2,f_2\}$, total weight $28$.
	\item $z_{123} \setminus (x_1 \cup y_{12}) = \{f_1,f_2,g_4\}$, total weight $28$.
	\item $z_{143} \setminus (x_1 \cup y_{12}) = \{f_1,g_2\}$, total weight $21$.
	\item $z_{243} \setminus (x_1 \cup y_{12}) = \{f_2,g_1\}$, total weight $21$.
\end{enumerate}
Therefore $z_{123}$ is a legitimate choice.
At the final step the options are:
\begin{enumerate}
	\item $o_4 \setminus (x_1 \cup y_{12} \cup z_{123}) = \{a_4\}$, total weight $14$.
	\item $x_4 \setminus (x_1 \cup y_{12} \cup z_{123}) = \{b_1\}$, total weight $14$.
	\item $y_{14} \setminus (x_1 \cup y_{12} \cup z_{123}) = \{d_2\}$, total weight $5$.
	\item $y_{24} \setminus (x_1 \cup y_{12} \cup z_{123}) = \{d_1,e_1\}$, total weight $9$.
	\item $y_{34} \setminus (x_1 \cup y_{12} \cup z_{123}) = \{d_1,e_1,d_2\}$, total weight $14$.
	\item $z_{124} \setminus (x_1 \cup y_{12} \cup z_{123}) = \{g_3\}$, total weight $14$.
	\item $z_{134} \setminus (x_1 \cup y_{12} \cup z_{123}) = \{g_2\}$, total weight $14$.
	\item $z_{234} \setminus (x_1 \cup y_{12} \cup z_{123}) = \{g_1\}$, total weight $14$.
\end{enumerate}
We see that all options increase the value by at most $14$.

\end{document}